\newcommand{\eps}{\varepsilon}
\newcommand{\dist}{\mathrm{dist}}
\newcommand{\C}{\mathrm{C}\mskip 0.4mu}
\newcommand{\I}{\mathrm{I}\mskip 0.4mu}
\newcommand{\cnd}{\mskip 1mu | \mskip 1mu}
\newcommand\eqdef{\stackrel{\mathclap{\normalfont\mbox{\tiny def}}}{=}}
\newtheorem{theorem}{Theorem}
\newtheorem*{theorem*}{Main Result}
\newtheorem*{theorem-bis*}{Theorem $\mathbf{\ref{th:main}'}$}
\newtheorem{lemma}{Lemma}
\newtheorem*{claim}{Claim}
\theoremstyle{remark}
\newtheorem{definition}{Definition}
\newtheorem*{remark*}{Remark}
\begin{document}

\title{Clustering with Respect to the Information Distance}
\author{Andrei Romashchenko}

\maketitle

\begin{abstract}
We discuss the notion of a dense cluster with respect to the information distance and prove that all such clusters have an extractable core that represents the mutual information shared by the objects in the cluster.
\end{abstract}

\section{Introduction}

In the seminal paper \cite{InformationDistance1998},  Bennett et al. introduced the notion of \emph{information distance} based on Kolmogorov complexity. Loosely speaking, the distance between two individual finite objects is defined as the length of the shortest program that can translate these objects to each other. A surprising result proven in \cite{InformationDistance1998}  claims that the length of such a program (that performs the translation of the given objects to each other, in both directions) is substantially equal to the maximum of the lengths of two separate programs translating the objects to each other.  The optimal lengths of the programs depend, of course, on the choice of the programming language. Such a choice  may look arbitrary. However, the  framework of Kolmogorov complexity (see \cite{kolmogorov1965}) provides us with an \emph{optimal} programming language where the required programs have the minimum (up to an additive constant) possible length.

The notion of information distance attracted the attention of theoretical computer scientists and also inspired many experimental works, where various practical approximations of the information distance were computed for real-world data. This technique was typically used to reveal clusters and classify data of some specific type (texts, music recordings, genetic codes, and so on). In such experiments, the revealed clusters (groups of objects with small pairwise information distances) consist  of the data having something in common: texts written in the same language, music pieces of the same genre, genetic information of closely related species, and so on (see, e.g, \cite{text-clusters,music-clusters,clusters-compression}). 
In the studied practical  examples, it can be usually observed that the revealed dense clusters  have some core (e.g., the common language vocabulary, specific characteristics of a particular music genre, the genetic information of the common predecessor of several biological species), even if providing  an explicit description of such a core is not immediate.

In this paper, we study a similar phenomenon in a purely theoretical setting. We show that the \emph{only} reason why a large set of objects may form a dense cluster with respect to the information distance is that these objects share some \emph{common information} in the sense of G\'acs and K\"orner \cite{gacs-korner}. In other words, there must exist a \emph{core}  that is simple conditional on each object of the cluster and that represents in some sense the mutual information shared by all these objects.
Such results were used (more or less explicitly)  as technical tools in \cite{romashchenko-ccc-2003,romashchenko-ppi-2003,muchnik-romashchenko-ppi}.  We believe that the observed phenomenon is interesting in its own right, and we want to draw more attention to this issue. In this paper, we propose a self-contained explanation of the mentioned result using a simplified definition of a cluster proposed by Alexander Shen
(personal communication, June 10, 2021), see Definition~\ref{def:cluster} below.
 The proofs of the main results follow the arguments suggested in \cite{romashchenko-ccc-2003,romashchenko-ppi-2003}.

\paragraph{Notation and standard properties of Kolmogorov complexity}  In what follows we use the standard notation $\C(x)$ for the plain Kolmogorov complexity of a string $x$ and $\C(x \cnd y)$  for the plain Kolmogorov complexity of a string $x$  conditional on a string $y$ (see, e.g.,  \cite{kolmogorov-textbook} or \cite{svu-textbook}).  In this paper we do not discuss prefix-free complexity, monotone complexity,  or any other subtler versions of algorithmic complexity, so we may use for $\C(x)$  and $\C(x \cnd y)$  the term \emph{Kolmogorov complexity} without risk of ambiguity.

As usual, we assume to be fixed an \emph{encoding} of tuples, i.e.,  a computable bijection between  binary strings and tuples (finite ordered lists) of strings. Thus, every tuple of strings 
is associated with its \emph{code}, which is an individual binary string.  Keeping this in mind, we assume that  $\C(x,y)$ denotes Kolmogorov complexity of the code of the pair $\langle x,y\rangle $;  $\C(x,y,z)$ denotes Kolmogorov complexity of the code of the triple $\langle x,y,z\rangle$, and so on. 

Observe that any two encodings of this type  (computable bijections between tuples and individual strings) are equivalent: there are translation algorithms converting a code of a tuple in one encoding system into the code of the same tuple in the other encoding systems,  and the other way around.
This means that the choice of the encoding is quite arbitrary and affects the value of Kolmogorov complexity by at most  an additive constant, see \cite[Section~2.1]{svu-textbook}.

The classical Kolmogorov--Levin theorem (the \emph{chain rule}), \cite{kolmogorov1968,zvonkin-levin}, establishes the relation between Kolmogorov complexity of a pair and conditional Kolmogorov complexity,
\[
\C(x,y) = \C(x) + \C(y \cnd x)  + O(\log (\C(x)+ \C(y))).
\]
The mutual information of two strings and the conditional mutual information are defined as  $\I(x:y) \eqdef \C(y) - \C(y \cnd x)$ and $\I(x:y \cnd z) \eqdef  \C(y \cnd z) - \C(y \cnd x,z)$ respectively. 
From the Kolmogorov--Levin theorem it follows that mutual information is symmetric up to a logarithmic additive term:
\[
\I(x:y) = \C(x) + \C(y) - \C(x,y) + O(\log n) = \I(y:x) + O(\log n) 
\]
and 
\[
\I(x:y \cnd z) = \C(x,z)  +\C(y,z) - \C(x,y,z)  - \C(z)  + O(\log m) = \I(y:x \cnd z) + O(\log m), 
\]
where $n = \C(x) +\C(y)$ and $m = \C(x) +\C(y) + \C(z)$.

\section{Clusters with respect to the information distance}\label{s:2}

The \emph{information distance between} $x$ and $y$ can be  defined as 
\[
\dist(x, y) = \max\{ \C(x \cnd y), \C(y \cnd x) \}.
\]
(It is not  a distance in the proper sense since the triangle inequality is true only up to an additive logarithmic term.) Information distance measures the amount of information needed to obtain one of the strings given another one. Speaking informally, this value can be understood as a measure of  ``similarity'' (or rather ``non-similarity'') between strings: if $x,y,z$ are three strings of the same length, and $\dist(x, y)$ is much less than $\dist(x, z)$, we can say that $x$ is more ``similar'' to $y$ than to $z$.

We  can consider clusters defined in the sense of this information distance, i.e.,  large sets of strings with small diameters. Roughly speaking, a cluster is a set of strings of cardinality at least $2^m$ and diameter at most $m$. As  is usual in the theory of Kolmogorov complexity, we should admit a minor imprecision (say, logarithmic in $m$) of the parameters. The formal definition of a cluster involves two parameters,  the diameter and the logsize (logarithm of the cardinality):
\begin{definition}\label{def:cluster}
We say that a set of strings $S$ is an $(m,\ell)$-\emph{cluster}  if  for all $x_1,x_2\in S$ we have $\dist(x_1,x_2)\le m $ and $\# S \ge 2^{\ell}$. 
An equivalent wording: 
we  can say that this $S$ is a cluster with parameters $(m,\ell)$.
The minimal suitable value of $m$ is called the cluster's \emph{diameter} and the maximal suitable integer number $\ell$ is called  the cluster's \emph{logsize}.
\end{definition}
 The ``density'' of a cluster can be measured by the difference between the diameter and the logsize: the closer they are to each other, the denser is the cluster. 
 We usually deal with clusters where this  difference is bounded by  $O(\log m)$. 

First of all, do the dense clusters exist? The answer to this question is yes, we can find $(m,\ell)$-clusters with only a logarithmic gap between $m$ and $\ell$. Indeed, for every string $z$ we may consider a ``canonical'' cluster or a \emph{daisy} that consists of strings $x$ such that
$ \C(z \cnd x) \approx 0 $ and $\C(x \cnd z) \lesssim m$.
To make these approximate equality and inequality more specific, we fix a parameter $d$ and define a daisy with imprecision $d$ as follows.
\begin{definition}
An \emph{$(m,d)$-daisy with a core} $z$ is the set of all $x$ such that
\[ 
\C(z \cnd x) \le d \text{ and } C(x \cnd z) \le m + d.
\]
\end{definition}
Observe that the definition of a daisy involves two integer parameters, but they do not play the same role as the parameters  in the general definition of an $(m,\ell)$-cluster.

Every $(m,d)$-\emph{daisy} with $d=O(\log m)$ is a cluster, i.e., it satisfies Definition~\ref{def:cluster} with a logarithmic gap between the diameter and the logsize.
Indeed, from the definition of an $(m,d)$-daisy  it follows that for all $x_1,x_2$ in this set we have 
\[
\C(x_1 \cnd x_2) \le \C(z \cnd x_2) + \C(x_1 \cnd z) + O(\log m) .
\]
A similar bound applies to $\C(x_2 \cnd x_1)$.  Therefore,
\[
\dist(x_1, x_2) \le m + O(d+\log m) = m + O(\log m).
\]
The cardinality of this set is at least $2^m$ since it contains all pairs $\langle z,w\rangle$ where  $w$ is an $m$-bit string. 
Thus, this set is an $(m + O(\log m), m)$-cluster.

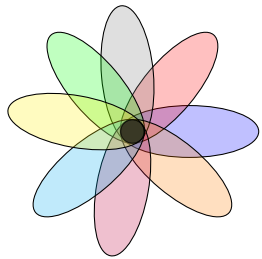
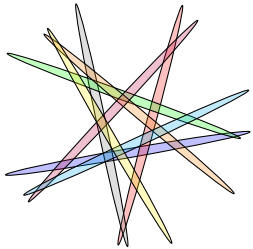
\begin{figure}
\centering
\begin{subfigure}{.45\textwidth}
\begin{center}
\begin{tikzpicture}[scale=0.23]

    \draw[xshift=3.3cm, yshift=0.0cm, rotate=0,  fill opacity=0.25, fill=blue] (0,0) ellipse (4cm and 1.5cm)  ;
    \draw[rotate=50, xshift=3.3cm, yshift=0.0cm,  fill opacity=0.25, fill=red] (0,0) ellipse (4cm and 1.5cm)  ;
    \draw[rotate=95, xshift=3.3cm, yshift=0.0cm,  fill opacity=0.25, fill=gray] (0,0) ellipse (4cm and 1.5cm)  ;
    \draw[rotate=130, xshift=3.3cm, yshift=0.0cm,  fill opacity=0.25, fill=green] (0,0) ellipse (4cm and 1.5cm)  ;
    \draw[rotate=-40, xshift=3.3cm, yshift=0.0cm,  fill opacity=0.25, fill=orange] (0,0) ellipse (4cm and 1.5cm)  ;      
    \draw[rotate=-100, xshift=3.3cm, yshift=0.0cm,  fill opacity=0.25, fill=purple] (0,0) ellipse (4cm and 1.5cm)  ;   
    \draw[rotate=-140, xshift=3.3cm, yshift=0.0cm,  fill opacity=0.25, fill=cyan] (0,0) ellipse (4cm and 1.5cm)  ;   
    \draw[rotate=170, xshift=3.3cm, yshift=0.0cm,  fill opacity=0.25, fill=yellow] (0,0) ellipse (4cm and 1.5cm)  ;   
    
       \draw[opacity=0.7, fill=black!95] (0,0) ellipse (0.68cm and 0.68cm);    
\end{tikzpicture}
\caption{A bunch of intersecting sets having a~common core.}\label{f:sunflower}
\end{center}
\end{subfigure}%
\rule{4pt}{0pt} 
\begin{subfigure}{.45\textwidth}
\begin{center}
\begin{tikzpicture}[scale=0.075]

    \draw[xshift=0.3cm, yshift=-5.0cm, rotate=0,  fill opacity=0.25, fill=blue, rotate=10] (0,0) ellipse (22cm and 0.8cm)  ;
    \draw[xshift=4.3cm, yshift=0.0cm,  fill opacity=0.25, fill=red, rotate=75, ] (0,0) ellipse (22cm and 0.8cm)  ;
       \draw[xshift=-4.3cm, yshift=0.0cm,  fill opacity=0.25, fill=gray, rotate=102, ] (0,0) ellipse (22cm and 0.8cm)  ;
       \draw[xshift=-0.4cm, yshift=5.2cm,  fill opacity=0.25, fill=green,rotate=160, ] (0,0) ellipse (22cm and 0.8cm)  ;
       \draw[xshift=2.3cm, yshift=2.0cm,  fill opacity=0.25, fill=orange,rotate=-40,] (0,0) ellipse (22cm and 0.8cm)  ;      
       \draw[xshift=-2.3cm, yshift=2.5cm,  fill opacity=0.25, fill=purple, rotate=47,] (0,0) ellipse (22cm and 0.8cm)  ;   
       \draw[xshift=1.8cm, yshift=-3.1cm,  fill opacity=0.25, fill=cyan, rotate=-155,] (0,0) ellipse (22cm and 0.8cm)  ;   
    \draw[xshift=-2.9cm, yshift=-1.9cm,  fill opacity=0.25, fill=yellow, rotate=120,] (0,0) ellipse (22cm and 0.8cm)  ;

\end{tikzpicture}
\caption{A bunch of pairwise intersecting sets with no common core.}\label{f:nocore}
\end{center}
\end{subfigure}
\caption{Geometric representation of clusters.}
\end{figure}

A daisy is by definition a cluster with an explicitly given core.
A  daisy of strings intuitively resembles  a flower with a core and many petals. We can draw it on the plane as a family of $2^m$ pairwise intersecting sets (so that the mutual information of every two strings corresponds to the size of the intersection between two sets) such that all these sets have a common part (the core), as shown in Fig.~\ref{f:sunflower}, and the size of each ``petal'' (outside the core) is at most $m$. However,  a naive parallelism between the mutual information and intersections of sets  can be  deceiving. Indeed, G\'acs and K\"orner  showed that the mutual information of two strings may not correspond to any material object (see \cite{gacs-korner}). Moreover, even if the mutual information of each pair of objects can be ``materialized,'' it seems possible that pairwise intersecting objects  do not share any common core, as in the example in Fig.~\ref{f:nocore}.

So a natural question arises: do there exist clusters substantially different from a daisy? Rather surprisingly, it turns out that  there are no other  clusters besides the daisies and their subsets. We can say informally that all  clusters (in the sense of Definition~\ref{def:cluster}) resemble  Fig.~\ref{f:sunflower} and not  Fig.~\ref{f:nocore}.
This is the  main result of this paper.

\begin{theorem*}[informal version] 
Every  cluster in the sense of information distance is a sufficiently large subset of some daisy. 
\end{theorem*}

Observe that every large enough  
subset of a cluster is still a cluster (of high enough density). Thus, this theorem may be interpreted as a description of all dense enough clusters as sufficiently large parts of daisies.

Now we proceed with a more formal statement.
\begin{theorem}[\textbf{main result}, the formal version]\label{th:main}
Let $S$ be a set of strings such that $\C(x \cnd x') \le m$ for every two strings $x,x'\in S$.
Assume that $\log \#S \ge m - d $ for some $d$. Then there exists a string  $z$ such that 
\[
\C(z \cnd x) \le  O(d + \log m) \text{ and } \C(x \cnd z) \le m+ O(d + \log m)
\]
for all $x\in S$.
\end{theorem}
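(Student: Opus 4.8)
The plan is to prove that a dense cluster necessarily has the ``daisy'' shape $x \approx (\text{core}) + (\text{petal of length } \le m)$ and to extract the core by a counting argument whose entire force comes from the lower bound $\# S \ge 2^{m-d}$. I first pin down the complexity profile of the cluster. Fix any $a \in S$. The ball $\{y : \C(y \cnd a) \le m\}$ has at most $2^{m+1}$ elements, whereas the sub-level set $\{y : \C(y \cnd a) \le m-d-2\}$ has fewer than $2^{m-d-1}$; since $S$ is contained in the former and has at least $2^{m-d}=2\cdot 2^{m-d-1}$ elements, fewer than half of the $x\in S$ can land in the latter. Hence for a majority of $x \in S$ we have $m - d - O(1) \le \C(x \cnd a) \le m$, and therefore $\I(a:x) = \C(x) - \C(x \cnd a) = \C(x) - m + O(d)$. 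Writing $\gamma$ for the (near-maximal) complexity of a typical element, this says that the pairwise mutual information is $\gamma - m + O(d)$, and a short chain-rule computation, $\C(z) + \C(x \cnd z) = \C(x) + \C(z \cnd x) + O(\log m)$, shows that any core meeting the two target inequalities must satisfy $\C(z) = \gamma - m + O(d + \log m)$. In other words $z$ must morally be the G\'acs--K\"orner common information of the cluster.

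Next I isolate what has to be constructed: a string $z$ with $\C(z) \approx \gamma - m$ whose information is contained in every element (equivalently $\C(z \cnd x) = O(d+\log m)$) and within distance $m$ of every element ($\C(x \cnd z) \le m + O(d+\log m)$). The distance bound is the cheap half: every $a\in S$ is already a radius-$m$ center of the whole cluster, since $\C(x \cnd a)\le m$ for all $x\in S$; the real task is to \emph{compress} such a center from complexity $\gamma$ down to complexity $\gamma - m$ while keeping it reconstructible from each element. I therefore take $z$ to be the common core produced by the extraction below, a low-complexity string whose radius-$m$ ball $\{x : \C(x \cnd z)\le m\}$ still captures a majority of $S$. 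The coverage immediately gives the distance bound for the captured elements, and, since the captured set contains a typical element, it forces the matching lower bound $\C(z) \ge \gamma - m - O(\log m)$.

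The heart of the argument, and the step I expect to be the main obstacle, is showing that a single such core is \emph{simultaneously} simple given every element. For one pair $(a,b)$ this is simply false: by G\'acs--K\"orner the mutual information $\I(a:b)$ need not be materializable by any $z$ with $\C(z \cnd a),\C(z \cnd b)\approx 0$. What rescues the cluster is exactly its cardinality: the core is being ``viewed'' through at least $2^{m-d}$ distinct petals of length $\le m$, and this redundancy is what pins it down. I would make this quantitative by a Muchnik-style extraction: given a single $x$, enumerate the candidate cores --- strings of the expected complexity $\gamma - m + O(d)$ whose radius-$m$ ball contains $x$ --- and use the size bound on $S$ to argue that only $2^{O(d+\log m)}$ genuinely distinct candidates survive, so the true core can be singled out from $x$ with $O(d+\log m)$ bits of advice, yielding $\C(z \cnd x) \le O(d+\log m)$ and, with it, the upper bound $\C(z)\le \gamma-m+O(d+\log m)$. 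The two delicate points are (i) that the counting really collapses the candidate set to $2^{O(d+\log m)}$ rather than the a priori $2^{\gamma-m}$ possibilities, which is precisely where the interplay between the diameter $m$ and the logsize $\ge m-d$ must be exploited, and (ii) that neither the advice nor the selection rule secretly encodes $\C(S \cnd x)$ --- which may be huge --- so the reconstruction of $z$ from $x$ has to be genuinely local, touching $S$ only through the parameters $m,d$ and a few complexity thresholds.

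Finally I would tidy up. The conclusions so far hold for a majority of $x\in S$, and I would upgrade them to all $x\in S$ by a separate short argument for the atypical elements (those unusually close to the chosen center), checking that the same $z$ still satisfies both inequalities for them, possibly at the cost of enlarging the additive term. It then remains to assemble the triangle-type estimates with the promised $O(d+\log m)$ slack from the chain rule and the symmetry of $\I$ recorded in the preliminaries, and to collect the accumulated logarithmic error terms into the stated $O(d+\log m)$ bounds of Theorem~\ref{th:main}.
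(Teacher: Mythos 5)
Your proposal correctly identifies the shape of the answer (the core must have complexity about $\C(x)-m$ for a typical element $x$, and its simplicity given such an $x$ then follows from the chain rule) and correctly locates the difficulty: producing \emph{one} string $z$ that works for all elements simultaneously, in the face of the G\'acs--K\"orner obstruction. But at exactly that point the proposal stops being a proof. The ``Muchnik-style extraction'' is never carried out: for each fixed $x$ you enumerate candidate cores whose radius-$m$ ball contains $x$, and the chain rule does collapse this candidate set to $2^{O(d+\log m)}$ strings (your point (i) is the easy half, at least for $x$ of near-maximal complexity in $S$; note also that complexities of elements of a cluster need not concentrate around a single value $\gamma$ --- they can spread over a range of width $m$). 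What is missing is any argument that the candidate sets of \emph{different} elements $x,x'\in S$ share a common member, i.e.\ that some single candidate's ball captures a majority of $S$. That existence claim is the entire content of the theorem; per-element counting cannot deliver it, since a priori the $2^{m-d}$ elements could each point to pairwise disjoint families of candidates --- this is exactly the situation depicted in Fig.~\ref{f:nocore}. Your ``delicate points'' (i) and (ii) are not loose ends to be checked at the end; unresolved, they \emph{are} the theorem.

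The paper fills this gap with a mechanism your proposal does not have: the core is not extracted from the information content of the elements at all. One enumerates all $(m,m-d)$-clusters (an enumerable property), greedily discards any cluster whose intersection with a previously kept one exceeds $2^{m-d'}$ (with $d'=2d+2$), and calls the survivors \emph{referential} clusters; the core is the ordinal number $i$ of the referential cluster that $S$ equals or largely intersects. Then $\C(x \cnd i)\le m+O(d+\log m)$ is immediate (specify $x$ by its rank inside $S_i$, plus $m$), and the key counting step is the multiplicity bound: by an inclusion--exclusion argument, any string $x$ lies in at most $2^{d+1}$ referential clusters, because they pairwise overlap in fewer than $2^{m-d'}$ elements yet all sit inside the $\le 2^{m+1}$-element ball around $x$; hence $i$ is recoverable from $x$ with $d+O(\log m)$ bits, i.e.\ $\C(i \cnd x)\le d+O(\log m)$. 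Finally, the path-counting lemma (many two-step chains $x$ -- $x''$ -- $z$ force $\C(z\cnd x)$ to be small) transfers the daisy property from $S_i$ to all of $S$, which also handles your ``atypical elements'' uniformly. Note how this construction resolves your worry (ii): the entire enumeration is computable from $m$ and $d$ alone, so the only non-uniform advice needed to get $z$ from $x$ is the rank of $S_i$ among the few referential clusters containing $x$, which the multiplicity bound caps at $d+O(1)$ bits.
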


Theorem~\ref{th:main} can be naturally rephrased in terms of clusters and daisies: 
it claims that for every $(m,m-d)$-cluster $S$ 
there exists a string $z$ such that $S$ is included in an
$(m + O(d+\log m),O(d+\log m))$-daisy 
with the core $z$. Notice that the found core $z$ possibly does not belong to $S$.
In fact, a cluster may even not contain  any element close to its core.

\begin{proof} 
We start the proof with the following lemma.

\begin{lemma}[Many paths $x$ -- $y$ -- $z$ imply one shorter path $x$ -- $z$]\label{lem:path}
Assume that for given strings $x$, $z$ and for given numbers $u,v,w$ there are at least $2^u$ strings $y$ such that 
\[
\C(y\cnd x) < v \quad \text{and}\quad \C(z\cnd y)<w.
\]
Then $\C(z\cnd x)\le v+w-u+O(\log(v+w))$.
\end{lemma}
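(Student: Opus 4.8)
The plan is to prove the bound by a counting-and-enumeration argument: I will exhibit a set $P$ of strings that (i) can be enumerated given only $x$ together with the parameters $u,v,w$, (ii) is guaranteed to contain $z$, and (iii) has at most $2^{v+w-u}$ elements. Once such a $P$ is in hand, $z$ is determined by its ordinal position in the enumeration of $P$; since there are at most $2^{v+w-u}$ positions, this index costs at most $v+w-u$ bits, and adding a self-delimiting description of the three parameters (an $O(\log(v+w))$ overhead, using that $u<v$ so all three numbers are $O(v+w)$) yields $\C(z\cnd x)\le v+w-u+O(\log(v+w))$.

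To build $P$, first observe that given $x$ I can enumerate the list $L$ of all strings $y$ with $\C(y\cnd x)<v$ by dovetailing every program of length less than $v$ on input $x$ and collecting the outputs; this list has fewer than $2^v$ members. For each such $y$, the strings $t$ with $\C(t\cnd y)<w$ can likewise be enumerated by running every program of length less than $w$ on $y$, and there are fewer than $2^w$ of them. I call a string $t$ \emph{popular} if at least $2^u$ distinct members $y$ of $L$ satisfy $\C(t\cnd y)<w$, and I set $P$ to be the set of popular strings. The content of the hypothesis is exactly that the $2^u$ witnessing strings $y$ all lie in $L$ and all reach $z$ via a program of length less than $w$; hence $z$ is popular and $z\in P$.

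The size bound on $P$ comes from double counting the reachability pairs. Each $y\in L$ reaches fewer than $2^w$ strings and $\#L<2^v$, so the total number of pairs $(y,t)$ with $y\in L$ and $\C(t\cnd y)<w$ is less than $2^{v+w}$. Every popular $t$ consumes at least $2^u$ of these pairs, and the pair-sets of distinct popular strings are disjoint (distinct second coordinate), so $\#P\cdot 2^u\le 2^{v+w}$, i.e. $\#P\le 2^{v+w-u}$, as required. The decoder that recovers $z$ from $x$ therefore reads the self-delimiting parameters, computes $v+w-u$, reads that many further bits as an index, runs the enumeration of $P$, and outputs the indexed element.

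The one point that needs care is that $P$ must be genuinely \emph{enumerable} from $x$, whereas the predicates $\C(\,\cdot\cnd x)<v$ and $\C(\,\cdot\cnd\,\cdot)<w$ are only semi-decidable. This is fine because popularity is a monotone, semi-decidable event: by dovetailing all programs simultaneously I only ever discover new members of $L$ and new reachability pairs, so the number of distinct $y\in L$ known to reach a fixed $t$ can only grow, and the instant it reaches $2^u$ I may safely output $t$. I expect the main effort to lie precisely here — presenting this dovetailing enumeration cleanly and bookkeeping the encoding of $u,v,w$ so that the additive error is honestly $O(\log(v+w))$ and does not secretly depend on the (possibly large) magnitudes of the parameters themselves.
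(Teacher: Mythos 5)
Your proposal is correct and follows essentially the same route as the paper: count the at most $2^{v+w}$ two-step paths $(y,t)$, observe that each ``popular'' $z$ absorbs at least $2^u$ of them so there are at most $2^{v+w-u}$ candidates, enumerate these candidates given $x$ and the parameters, and describe $z$ by its index. Your write-up merely makes explicit the details (dovetailing for enumerability, disjointness in the double count, self-delimiting encoding of $u,v,w$) that the paper's terse proof leaves implicit.
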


\begin{proof}[Proof of Lemma~\ref{lem:path}]
Given $x$, $v$, $w$, we can enumerate all $z$ for which there exists $y$ with the required properties. There are at most $2^{v+w}$ paths of length $2$ and at least $2^u$ of these paths should lead to such a $z$. So there are at most $2^{v+w-u}$ different $z$, and this implies the  bound for $\C(z\cnd x)$.
\end{proof}

The previous lemma can be used to merge clusters, as the following remark shows.
\begin{remark*}
\label{lem:merge}
(Merging two clusters).
If $S$ and $S'$ are two clusters of diameter $m$ that have at least $2^{m-d}$ common elements, then  their union is a cluster  of diameter at most $m+d+O(\log m)$.
Indeed, if $x$ and $x'$ are elements from $S$ and $S'$ respectively, then there are at least $2^{m-d}$ paths $x-x''-x'$ such that $x''\in S\cap S'$. Therefore, from Lemma~\ref{lem:path} it follows that $\dist(x,x')\le 2m-(m-d)+O(\log m)=m+d+O(\log m)$.
\end{remark*}

More specifically, we will need  the following version of cluster merging:

\begin{lemma}[Merging a cluster with a daisy]\label{lem:merge-canonical}
Assume that an $(m+d_1,m-d_2)$-cluster $S$  has at least $2^{m-d_3}$ common elements with an $(m,d_4)$-daisy $S'$ with a core $z$. Then $S$ is contained in the daisy $S''$ with the same core $z$ with the parameters 
\begin{equation}\label{eq:can}
\left(  m + O\big({\textstyle\sum\limits_i} d_i+\log m\big),  O\big({\textstyle\sum\limits_i} d_i+ \log m\big) \right). 
\end{equation}
\end{lemma}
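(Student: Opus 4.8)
The plan is to show that every $x \in S$ lies in the daisy $S''$, i.e.\ to establish the two membership bounds $\C(z \cnd x) \le O(\sum_i d_i + \log m)$ and $\C(x \cnd z) \le m + O(\sum_i d_i + \log m)$. Both will follow from Lemma~\ref{lem:path}, applied twice, using the $\ge 2^{m-d_3}$ common elements $y \in S\cap S'$ as the bundle of intermediate points on paths between $x$ and $z$. The key observation is that these common elements inherit good estimates from \emph{both} structures: being in $S'$ they satisfy $\C(z\cnd y)\le d_4$ and $\C(y\cnd z)\le m+d_4$, while being in $S$ (together with the fixed $x\in S$) they satisfy $\C(y\cnd x)\le m+d_1$ and $\C(x\cnd y)\le m+d_1$. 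Note that the logsize parameter $d_2$ of the cluster never enters the estimates; it is absorbed harmlessly into the $O(\sum_i d_i + \log m)$ terms.

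For the first bound I would fix $x\in S$ and apply Lemma~\ref{lem:path} in the direction from $x$ to $z$, with the common elements playing the role of the $2^{m-d_3}$ intermediate strings $y$, taking $v\approx m+d_1$ (from $\C(y\cnd x)\le m+d_1$) and $w\approx d_4$ (from $\C(z\cnd y)\le d_4$). The lemma then yields
$\C(z\cnd x)\le (m+d_1)+d_4-(m-d_3)+O(\log m)=d_1+d_3+d_4+O(\log m)$, which is of the required form $O(\sum_i d_i+\log m)$.

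For the second bound I would apply the same lemma in the reverse direction, from $z$ to $x$, using the same bundle of common elements but now invoking $v\approx m+d_4$ (from $\C(y\cnd z)\le m+d_4$) and $w\approx m+d_1$ (from $\C(x\cnd y)\le m+d_1$). This gives
$\C(x\cnd z)\le (m+d_4)+(m+d_1)-(m-d_3)+O(\log m)=m+d_1+d_3+d_4+O(\log m)$, again of the required form $m+O(\sum_i d_i+\log m)$. Since these two estimates are exactly the membership conditions for the daisy with core $z$ and parameters~\eqref{eq:can}, we conclude that $S\subseteq S''$.

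The point that makes the statement non-trivial — and the step I expect to be the main obstacle — is that a naive triangle inequality through a \emph{single} common element would only bound $\C(x\cnd z)\lesssim 2m$. The improvement to $\approx m$ rests entirely on having a \emph{large} family of $2^{m-d_3}$ common elements, so that the $-u$ term in Lemma~\ref{lem:path} cancels one full factor of $m$ in the path count. Beyond recognizing this, the only things to check are that each common element is genuinely a legitimate intermediate point for the fixed $x\in S$ (which holds since $y\in S\cap S'$ lies in both structures and $x\in S$), and the routine bookkeeping of the strict-versus-nonstrict inequalities and the $O(\log m)$ correction terms when feeding the parameters into the lemma.
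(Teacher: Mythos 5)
Your proposal is correct and follows essentially the same route as the paper's own proof: both arguments apply Lemma~\ref{lem:path} twice along the bundle of $2^{m-d_3}$ common elements of $S\cap S'$, once in the direction $x$ -- $y$ -- $z$ with parameters $(v,w,u)\approx(m+d_1,\,d_4,\,m-d_3)$ and once in the direction $z$ -- $y$ -- $x$ with parameters $\approx(m+d_4,\,m+d_1,\,m-d_3)$, arriving at exactly the same two bounds. Your added remarks (that $d_2$ never enters the estimates, and that a single intermediate element would only give $\C(x\cnd z)\lesssim 2m$) are accurate and consistent with the paper's reasoning.
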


\begin{proof}
For every $x\in S$ there are at least $2^{m-d_3}$ chains $z-x'-x$ such that $x'\in S\cap S'$. From Lemma~\ref{lem:path} it follows that 
\[
\C(z \cnd x) \le m+d_1 + d_4 - (m-d_3) + O(\log m) =  O\big({\textstyle\sum\limits_i} d_i+ \log m\big)
\]
and
\[
\C(x \cnd z) \le m+d_4 + m+ d_1 - (m-d_3) + O(\log m) = m+ O\big({\textstyle\sum\limits_i} d_i+ \log m\big).
\]
Therefore, $x$ belongs to the daisy $S''$ with parameters \eqref{eq:can} and the base $z$.
\end{proof}
To prove the theorem, it is enough (thanks to Lemma~\ref{lem:merge-canonical}) to find a daisy with parameters
\[
\left( m + O\big( d+ \log m\big),  O\big( d +\log m\big) \right)
\]
that has a large (of cardinality at least $2^{m-O\big( d+\log m\big)}$) intersection with the given cluster~$S$. We do it as follows. The property of being a cluster with given parameters is enumerable. So we can run a process enumerating all $(m,m-d)$-clusters.
We do not restrict the length or complexity of strings in the clusters, so the enumeration will be infinite. As any other cluster with the same parameters,  our cluster $S$ will be enumerated at some stage of this process.

 Let us fix some threshold $d'$ (that will be slightly greater than $d$, see below). We say that two clusters $S_1, S_2$ have a \emph{large}  intersection if $\#(S_1\cap S_2) > 2^{m-d'}$.  To make the enumeration procedure defined above more economic, we will drop some clusters from this enumeration. We  will keep only the ones that do not have large intersections with one of the  clusters enumerated (and not dropped) earlier.  We call the clusters that are not dropped \emph{referential clusters}. We assign  to the referential clusters their ordinal numbers in the order they appear  in the enumeration. (Observe again that there can be infinitely many referential clusters.) We will see that either $S$ itself or some cluster that has a  large intersection with $S$ will become a referential cluster, and we plan to take its ordinal number in the enumeration as the core $z$  of the daisy  that we are looking for. 
 
First of all, we argue that every referential cluster is a part of  a daisy with parameters 
\[
\big( m + O(d + \log m), O(d + \log m)\big).
\]
Let $S_i$ be the referential cluster with ordinal number $i$.  
We start with the observation that every element $x\in S_i$ can be determined if we know $i$ and the ordinal number of $x$ in some standard ordering of $S_i$. 
To organize the process of enumeration of the referential clusters we  also need to know the number $m$, which requires $O(\log m)$ bits. Therefore, $\C(x\cnd i)\le m+O(d+\log m)$ for all $x\in S_i$. We want to show that $S_i$ is a part of a daisy with a core $i$. To this end, we show that $\C(i\cnd x)\approx 0$ for all $x\in S_i$. We use the following lemma saying that (under some conditions on the parameters) the multiplicity of the family $S_i$ is small, i.e., every $x$ is covered by only a small number of $S_i$. Then, to reconstruct $i$ given $x$, we need only the ordinal number of $S_i$ in the list of referential clusters containing $x$ (and also the number $m$, as before).

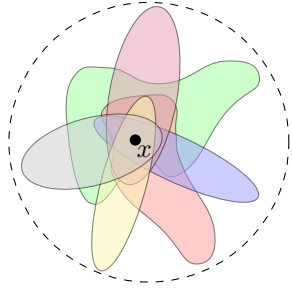
\begin{figure}
\centering
  \begin{tikzpicture}[scale=0.3]
            \draw[fill=green!40, opacity=0.5]  plot[smooth, tension=.9] coordinates {(-3.0,0.5) (-2.8,2.7) (-1.2,3.2) (1.2,2.5) (4,3.5) (5.5,2.7) (4.0,1.2) (2.0,-1.5) (0.2,-1.0) (-2,-2.8) (-3.0,0.5)};
           \draw[fill=red!40, opacity=0.5,rotate=-90]  plot[smooth, tension=.9] coordinates {(-2.0,1.2) (-1.5,1.8) (-0.3,1.8) (1.2,2.1) (4,3.5) (5.5,2.7) (4.0,1.2) (2.0,-0.8) (0.2,-1.0) (-1.5,-1.3) (-2.0,1.2)};

             \draw[fill=blue!40, opacity=0.5,rotate=-25] (2,0) ellipse (4 and 1);
             \draw[fill=purple!40, opacity=0.5,rotate=80] (2,0) ellipse (4 and 1.5);
             \draw[fill=yellow!40, opacity=0.5,rotate=-105] (2,0) ellipse (4 and 1);

             \draw[fill=gray!40, opacity=0.5,rotate=-165] (2,0) ellipse (3.2 and 1.5);             
            
          \draw[dashed] (0.6,-0.1) ellipse (6.2 and 6.2);

             \node at (0,0) [circle,fill,inner sep=1.5pt]{};
             \node at (0.4,-0.5){$x$};
                         
        \end{tikzpicture}
\caption{An element $x$ (shown as a dot) is covered by a family of sets (shown in different colors) all of which lie in a neighborhood of this element (shown as an area with a  dashed borderline).}    
\label{f:cover}    
\end{figure}

\begin{lemma}[Multiplicity bound]\label{lem:multiplicity}\label{l:multiplicity}
Assume that $d'>2d+1$. Then every string $x$ can be covered by at most $2^{d+1}$ referential $(m,m-d)$-clusters.
\end{lemma}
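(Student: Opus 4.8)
The plan is to regard the referential clusters that contain a fixed string $x$ as large, pairwise almost disjoint subsets of a single bounded universe, and then to bound their number by a packing argument. The three ingredients are a ball around $x$ that confines all these clusters, a lower bound on each cluster's size, and the near-disjointness forced by the selection rule; combining them by inclusion--exclusion should give the multiplicity bound, with the hypothesis $d'>2d+1$ playing the role of keeping the overlap correction below the main packing term.

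First I would fix $x$ and list all referential $(m,m-d)$-clusters $S_{i_1},\dots,S_{i_k}$ containing $x$. Since each $S_{i_j}$ has diameter at most $m$ and contains $x$, every $y\in S_{i_j}$ satisfies $\C(y\cnd x)\le m$, so all of these clusters lie inside the ball $B=\{y:\C(y\cnd x)\le m\}$. Counting programs of length at most $m$ gives $\#B<2^{m+1}$. On the other hand, being an $(m,m-d)$-cluster, each $S_{i_j}$ has at least $2^{m-d}$ elements. The remaining structural input comes from the construction of referential clusters: a cluster is retained only when it does \emph{not} have a large intersection with any earlier referential cluster, so any two referential clusters share at most $2^{m-d'}$ elements, and in particular this holds for every pair among $S_{i_1},\dots,S_{i_k}$.

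Now I would combine these facts, either by second-order Bonferroni or, equivalently, by peeling off the clusters one at a time and bounding each new ``private'' part. This yields
\[
2^{m+1}>\#B\ \ge\ \Bigl\lvert\bigcup_{j=1}^{k}S_{i_j}\Bigr\rvert\ \ge\ k\,2^{m-d}-\binom{k}{2}2^{m-d'}.
\]
The point of the margin $d'>2d+1$ is precisely that the total pairwise overlap $\binom{k}{2}2^{m-d'}$ remains a genuine lower-order correction to the packing term $k\,2^{m-d}$ for $k$ in the relevant range: roughly $\binom{k}{2}2^{m-d'}\approx k^2\,2^{m-d'}$, and $d'>2d+1$ makes this small next to $k\,2^{m-d}$ whenever $k\lesssim 2^{d+1}$. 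Hence the size-$2^{m-d}$ clusters fill $B$ essentially as if they were disjoint, and the displayed inequality reduces to $k\,2^{m-d}\lesssim \#B$, so that the multiplicity of $x$ is at most $\#B/2^{m-d}=2^{m+1}/2^{m-d}=2^{d+1}$.

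The step I expect to be delicate is exactly this last one: one must verify that the overlap term $\binom{k}{2}2^{m-d'}$ is dominated by the main term throughout the range of $k$ we care about, and it is the quantitative gap $d'>2d+1$ (not merely $d'>d$, which only guarantees the clusters are distinct) that makes the clusters near-disjoint enough to behave like an honest disjoint packing. Once this is in place, the bound $2^{d+1}$ is simply the ratio between the size of the confining ball and the guaranteed size of each cluster.
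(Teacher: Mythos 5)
Your strategy coincides with the paper's: confine every referential cluster containing $x$ to the ball $B=\{y:\C(y\cnd x)\le m\}$ of size less than $2^{m+1}$, invoke the size lower bound $2^{m-d}$ and the pairwise-intersection bound $2^{m-d'}$ forced by the selection rule, and combine the three by inclusion--exclusion. The paper's probabilistic Claim (events of probability greater than $\eps$ with pairwise intersections below $\eps^2/2$) is exactly your second-order Bonferroni inequality, just normalized by the size of $\bigcup_j S_{i_j}$ instead of being left as a count.

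The genuine gap is the step you yourself flag as delicate but do not carry out. The inequality
\[
2^{m+1}>\#B\ \ge\ k\,2^{m-d}-\binom{k}{2}2^{m-d'}
\]
does not by itself bound $k$: the right-hand side is a downward parabola in $k$, negative for all large $k$, so the inequality is satisfied by arbitrarily large $k$ as well. What is needed is the truncation device that the paper compresses into the words ``we decrease $N$ if needed'': if $k$ exceeded a threshold $K$, keep only the first $K$ clusters and verify that for this \emph{particular} $K$ the Bonferroni lower bound already exceeds $\#B$, a contradiction; hence $k<K$. Moreover, running the numbers shows the threshold cannot be $2^{d+1}$: with $k=2^{d+1}$ and $d'\ge 2d+2$ the guaranteed lower bound on the union is only $2^{m+1}-2^{m-1}$, which does not exceed $\#B<2^{m+1}$, so no contradiction arises; one must take $K=2^{d+2}$, where the bound becomes $2^{m+2}-2^{m+1}=2^{m+1}>\#B$. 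So your closing computation $\#B/2^{m-d}=2^{d+1}$, which silently treats the clusters as exactly disjoint, overshoots what the argument delivers; the honest conclusion is $k<2^{d+2}$. (To be fair, the paper's own application has the same factor-of-two looseness, taking $\eps=2^{-d}$ where only a lower bound of $2^{-d-1}$ on the events' probability is guaranteed; the discrepancy is harmless downstream, since the lemma is used only to conclude $\C(i\cnd x)\le d+O(\log m)$.)
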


\begin{proof}[Proof of Lemma~\ref{lem:multiplicity}]
Assume that some string $x$ is covered by $N$ referential clusters. Each cluster $S_i$ has size at least $2^{m-d}$, and the intersection of every two clusters  is at most $2^{m-d'}$ (otherwise the second cluster could not be  selected as a referential one).  Note also that all elements of all clusters that contain $x$ have conditional complexity at most $m$  conditional on $x$. Therefore, the union of all these clusters  has a size at most $2^{m+1}$  (all clusters covering $x$ are included in a rather small neighborhood of $x$, see Fig.~\ref{f:cover}).

We use the following probabilistic claim:
\begin{claim} Let $\eps$ be the inverse of a positive integer number\footnote{A similar claim is true for all real numbers $\eps>0$. The assumption that $1/\eps$ is an integer number slightly simplifies the calculations since we can ignore rounding.}.
 If there are $N$ events of probability greater than $\eps$, and all pairwise intersections have probability less than $\eps^2/2$, then $N < 2/\eps$.
 \end{claim}\label{claim}
 \noindent
 (The bounds in the claim are pretty tight: $1/\eps$ events of probability $\eps$ could be disjoint, and any number of independent events of probability $\eps$ have intersection~$\eps^2$.)
 \begin{proof}
Assume that we have $N=2/\eps$ events (we decrease $N$ if needed). 
By  the principle of inclusion and exclusion, 
 the probability of the union of these events is strictly greater than
\[
N\cdot \eps - \frac{N^2}{2}\cdot\frac{\eps^2}{2} =2- \frac{4}{2\eps^2}\cdot \frac{\eps^2}{2} = 1,
\]
a contradiction. 
\end{proof}

We consider the union of all $(m, m -d)$-clusters covering $x$  as the probability space with equiprobable points, where each cluster is an event.
To apply this claim, we note that each cluster  has a probability of at least $\eps:=2^{-d}$, and the intersections are of probability at most $2^{-d'}$. Since $d'>2d+1$, we can apply the Claim  and obtain the bound $2^{d+1}$ for the number of clusters covering $x$. Therefore, we have
\[ 
\C(x\cnd i)\le m+O(\log m) \ \ \text{and}\ \ \C(i\cnd x)\le d+O(\log m)
\]
for every element $x$ of every referential cluster $S_i$. Thus, each referential cluster is a part of an $\big(m + O(d+\log m), O(d+\log m) \big)$-daisy.
\end{proof}

Now we bind together all parts of the  argument. Assume that we have an $(m-d,m)$-cluster $S$. We let $d'=2d+2$ (so the  condition of Lemma~\ref{l:multiplicity} is true) and start the process of enumeration of referential clusters using $d'$ as the ``large intersection'' threshold. The construction guarantees that $S$ is one of the referential clusters or at least it has a large intersection with some referential cluster $S_i$. Every referential cluster  $S_i$ is a part of an  
$\big(m + O(d+\log m),O(d+\log m)\big)$-daisy. Therefore, we can apply Lemma~\ref{lem:merge-canonical} and conclude that $S$ is a part of a slightly bigger 
$\big(  m + O(d+\log m), O(d+\log m) \big)$-daisy,
and the theorem is proven. 
\end{proof}

\section{Clusters and the mutual information of a triple}
In this section, we discuss an application of Theorem~\ref{th:main} that motivated the definition of ``bunches'' proposed  in \cite{romashchenko-ccc-2003} (similar to the definition of clusters discussed in the  previous section).
\begin{theorem}[\cite{romashchenko-ccc-2003,romashchenko-ppi-2003}]
\label{th:extraction}
For every triple of strings $x,y,z$ there exists a string $w$ such that
\[
\C(w) = \I(x:y:z) + O(\eps+\log \C(x,y,z))
\]
and
\[
\max\{ \C(w \cnd x), \C(w \cnd y), \C(w \cnd z) \} = O(\eps +\log \C(x,y,z)),
\]
where $\eps := \max\big\{ \I(x:y \cnd z) , \I(x:z \cnd y) , \I(y:z \cnd x) \big\}$ and
\[
\I(x:y:z) := \C(x) + \C(y) + \C(z) - \C(x,y) - \C(x,z) - \C(y,z) + \C(x,y,z).
\]
\end{theorem}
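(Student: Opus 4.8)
The plan is to derive Theorem~\ref{th:extraction} from the cluster theorem (Theorem~\ref{th:main}): out of the triple $x,y,z$ I would build a single dense cluster and take the core produced by Theorem~\ref{th:main} as the string $w$. Throughout, let $\log$ abbreviate $\log\C(x,y,z)$. First I fix the scale. Writing $k:=\I(x:y:z)$ and using the identity $\I(x:y)=\I(x:y:z)+\I(x:y\cnd z)$ together with its two cyclic permutations, the hypothesis that all three conditional mutual informations are at most $\eps$ gives
\[
\I(x:y)=\I(x:z)=\I(y:z)=k+O(\eps+\log).
\]
Hence $\dist(x,y)=\max\{\C(x),\C(y)\}-k+O(\eps+\log)$, and similarly for the other two pairs; I set $m:=\max\{\C(x),\C(y),\C(z)\}-k$, so that $x$, $y$, and $z$ are pairwise at distance at most $m+O(\eps+\log)$.

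The cluster I would feed to Theorem~\ref{th:main} is the set of strings that are simultaneously close to all three objects,
\[
S:=\{\,s:\dist(s,x)\le m,\ \dist(s,y)\le m,\ \dist(s,z)\le m\,\}
\]
(with the customary $O(\log)$ slack in the radius). By the choice of $m$ the three given strings belong to $S$. Granting that $S$ is an $(m,m-d)$-cluster with $d=O(\eps+\log)$ — the point I return to below — Theorem~\ref{th:main} supplies a string $w$ with
\[
\C(w\cnd s)\le O(\eps+\log)\quad\text{and}\quad\C(s\cnd w)\le m+O(\eps+\log)
\]
for every $s\in S$. In particular $\C(w\cnd x),\C(w\cnd y),\C(w\cnd z)\le O(\eps+\log)$, which is the second assertion of the theorem.

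It remains to pin down $\C(w)$. For the upper bound I would use the elementary fact that a string which is simple conditional on each of two objects cannot be more complex than their mutual information: if $\C(w\cnd x),\C(w\cnd y)\le\delta$, then routing $x\to w\to y$ gives $\C(y\cnd x)\le\C(y\cnd w)+\delta+O(\log)$, whence $\C(w)=\C(w\cnd y)+\I(w:y)+O(\log)\le\I(x:y)+O(\delta+\log)$. With $\delta=O(\eps+\log)$ and $\I(x:y)=k+O(\eps+\log)$ this yields $\C(w)\le k+O(\eps+\log)$. For the matching lower bound, let $x_\star$ be the object of largest complexity, so that $m=\C(x_\star)-k$. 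The chain rule gives $\C(w)+\C(x_\star\cnd w)=\C(x_\star)+\C(w\cnd x_\star)+O(\log)=\C(x_\star)+O(\eps+\log)$, and since $\C(x_\star\cnd w)\le m+O(\eps+\log)$ we get $\C(w)\ge\C(x_\star)-m-O(\eps+\log)=k-O(\eps+\log)$. Together these give $\C(w)=\I(x:y:z)+O(\eps+\log)$, the first assertion.

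The main obstacle is the claim I deferred: that $S$ really is a dense cluster, i.e.\ that $\#S\ge 2^{m-O(\eps+\log)}$ and $\mathrm{diam}(S)\le m+O(\eps+\log)$. Both facts should follow from Kolmogorov-complexity counting arguments that crucially exploit the smallness of $\eps$. For the size, one checks that the strings sharing the common core of the triple and carrying up to $m$ essentially independent extra bits number about $2^m$, the count being controlled by the relation $\C(x,y,z)=\C(x)+\C(y)+\C(z)-2k+O(\eps+\log)$ forced by small conditional mutual informations. The diameter bound is the more delicate half: the naive triangle inequality through one of $x,y,z$ only yields $2m$, and the improvement to $m+O(\eps+\log)$ is exactly where the smallness of $\I(x:y\cnd z)$, $\I(x:z\cnd y)$, $\I(y:z\cnd x)$ must enter, to rule out pairs of elements of $S$ that are each close to the triple yet far from one another. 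I expect this diameter estimate — established by counting, most naturally via Lemma~\ref{lem:path} applied to a sufficiently large family of intermediate strings — to be the technical heart of the proof.
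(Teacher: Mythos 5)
Your endgame is fine: \emph{if} one had an $(m,m-O(\eps+\log\C(x,y,z)))$-cluster containing $x,y,z$ themselves, with $m=\max\{\C(x),\C(y),\C(z)\}-\I(x:y:z)$, then the core supplied by Theorem~\ref{th:main} would satisfy both assertions, and your two complexity estimates for $\C(w)$ (the routing argument for the upper bound, the chain rule with $x_\star$ for the lower bound) are correct. The genuine gap is the claim you defer, and it is not merely unproven --- it is false. Take $u,a,b,c$ to be independent random strings of length $r$ and set $x=\langle u,a\rangle$, $y=\langle u,b\rangle$, $z=\langle u,c\rangle$; then $\eps=O(\log r)$, $\I(x:y:z)=r+O(\log r)$, and your $m$ equals $r$. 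Writing $v_1,v_2$ for the first and second halves of a string $v$, consider $s_1=\langle u,a_1,b_1,c_1\rangle$ and $s_2=\langle u,a_2,b_2,c_2\rangle$. Each $s_i$ is at distance $m+O(\log r)$ from each of $x$, $y$, $z$ (e.g., $\C(s_1\cnd x)=|b_1|+|c_1|+O(\log r)=r+O(\log r)$ and $\C(x\cnd s_1)=|a_2|+O(\log r)$), so both lie in your set $S$ even with the customary slack; but $\C(s_1\cnd s_2)=3r/2+O(\log r)$, so $S$ has diameter at least $3m/2$ although $\eps\approx 0$. Hence $S$ is not a cluster at scale $m$, Theorem~\ref{th:main} cannot be applied to it, and no counting argument via Lemma~\ref{lem:path} can rescue the diameter bound, because the bound itself does not hold. (The size half of your claim is true, but exhibiting $2^{m-O(\eps+\log)}$ elements of $S$ essentially amounts to already having materialized the common information, so it cannot serve as a starting point either.)

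The paper avoids this trap by clustering around a single string rather than around all three: it takes the set ${\cal C}_\delta$ of $\delta$-\emph{clones} of $z$ conditional on $(x,y)$, i.e., all $z'$ whose complexity profile together with $(x,y)$ matches that of $(x,y,z)$ up to $\delta=O(\log\C(x,y,z))$. A counting argument gives $\#{\cal C}_\delta\ge 2^{\C(z\cnd x,y)-O(\delta)}$, and --- this is the step for which your proposal has no substitute --- the diameter bound $\C(z''\cnd z')\le\C(z\cnd x,y)+O(\eps+\log\C(x,y,z))$ for clones $z',z''$ is derived not from counting but from a \emph{non--Shannon-type} information inequality (a generalization of the Zhang--Yeung inequality). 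In the example above, the clones of $z$ are essentially the strings $\langle u,c'\rangle$ with $c'$ fresh, and the half-mixed strings $s_1,s_2$ are excluded because their profiles do not match. Since $x$ and $y$ do not belong to this cluster, the paper then needs an extra step that your plan would have gotten for free: after Theorem~\ref{th:main} produces the core $w$, Lemma~\ref{lem:path} is invoked to count chains $x$ -- $\hat z$ -- $w$ and $y$ -- $\hat z$ -- $w$ over clones $\hat z$, which yields $\C(w\cnd x),\C(w\cnd y)=O(\eps+\log\C(x,y,z))$. So the missing ideas are: choose as the cluster the clone set of one string conditional on the other two, prove its density via a non-Shannon-type inequality, and transfer simplicity of the core to the remaining two strings by the path-counting lemma.
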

\noindent
In particular, if the three values of conditional mutual information $\I(x:y \cnd z)$, $\I(x:z \cnd y)$, $\I(y:z \cnd x)$ are negligibly small (say, logarithmic in $\C(x,y,z)$) as shown in  Fig.~\ref{fig:venn}, then the mutual information shared by $x,y,z$ can be  materialized in the sense of \emph{common information} by G\'acs and K\"orner, \cite{gacs-korner}.
\def\firstcircle{(0,0) circle (1.5cm)}
\def\secondcircle{(0:2cm) circle (1.5cm)}
\def\thirdcircle{(-60:2cm) circle (1.5cm)}
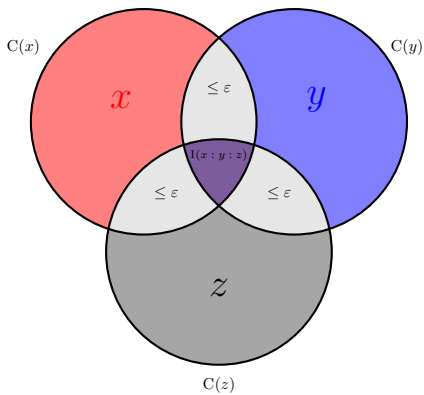
\begin{figure}[h]
\begin{center}
\begin{tikzpicture}[scale=1.0, every node/.style={scale=0.6}]

    \begin{scope}[fill opacity=0.5]
        \fill[red] \firstcircle;
        \fill[blue] \secondcircle;
        \fill[black!70] \thirdcircle;
    \end{scope}

    \begin{scope}[fill opacity=1]
    	\clip \firstcircle;
        \fill[black!10] \secondcircle;
        \fill[black!10] \thirdcircle;
    \end{scope}
    \begin{scope}[fill opacity=1]
    	\clip \secondcircle;
        \fill[black!10] \thirdcircle;
    \end{scope}
    \begin{scope}[fill opacity=0.5]
     	\clip \firstcircle;   
    	\clip \secondcircle;	
    	\clip \thirdcircle;	
        \fill[red] \firstcircle;
        \fill[blue] \secondcircle;
        \fill[gray] \thirdcircle;
    \end{scope}

        \draw[thick] \firstcircle node[below] {};
        \draw[thick] \secondcircle node [below] {};
	\draw[thick] \thirdcircle node [below] {};
	
	    \draw  (-1.6,1.0) node  {$\C(x)$};  	    \draw  (-0.3,0.3) node  {\color{red}\Huge $x$};
        \draw  (3.5,1.0) node  {$\C(y)$};  \draw  (2.3,0.3) node  {\color{blue}\Huge $y$};
         \draw  (1.0,-3.5) node {$\C(z)$}; \draw  (1.0,-2.2) node  {\color{black}\Huge $z$};
         \draw  (1.0,-0.45) node  {\scriptsize{$\I(x: y: z)$}};
         \draw  (1.0,-2.5) node  {};
         \draw[ opacity=1.0 ]  (1.0,0.45) node  {{$\le \eps$}};
         \draw[ opacity=1.0 ]  (1.8,-0.95) node  {{$\le \eps$}};
         \draw[ opacity=1.0 ]  (0.3,-0.95) node  {{$\le \eps$}};
\end{tikzpicture}
\caption[]{A Venn-like diagram representing information quantities for a triplet $(x,y,z)$. The area of each circle represents the value of Kolmogorov complexity of $x$, $y$, and $z$ respectively. The areas of the unions of any two circles represent Kolmogorov complexity of  pairs, and the area of the union of all three circles represents Kolmogorov complexity of the triple.  Accordingly, the intersections of every two circles represent the mutual information of pairs, and the intersection of all three circles represents the value $\I(x:y:z)$. The areas shown in light gray represent the values of the three conditional mutual information $\I(x:y \cnd z)$, $\I(x:z \cnd y)$, $\I(y:z \cnd x)$. 
Theorem~\ref{th:extraction} claims that if the three values of conditional mutual information are negligibly small, then the mutual information of the triple $\I(x:y:z)$ (which is  in this case $\eps$-close to each of the values $\I(x:y)$, $\I(x:z)$, and $\I(y:z)$) can be materialized.}\label{fig:venn}
\end{center}
\end{figure}

\begin{proof}
For each triple of strings $(x,y,z)$ we call by its \emph{complexity profile} the tuple of seven complexity quantities
\[
\big(
\C(x), \C(y), \C(z), \C(x,y), \C(x,z), \C(y,z), \C(x,y,z)  
\big).
\]
For every precision parameter (an integer number) $\delta$ we define the set ${\cal C}_\delta $ of $\delta$-\emph{clones of $z$ conditional on} $(x,y)$ as the set of all $z'$ such that
the complexity profile of $(x,y,z')$ differs in each component from the complexity profile of $(x,y,z)$ by at most $\delta$. 

A simple counting argument implies (see, e.g., \cite{clones})
that there exists a constant $D$ (independent of $x,y,z$) such that for all $x,y,z$ the set of $\delta$-clones of $z$ conditional on $(x,y)$ with $\delta = D\log \C(x,y,z)$ consists of 
$
2^{\C(z \cnd x,y) - O(\delta)}
$
strings $z'$. In what follows we fix such a $\delta$.

We claim that  ${\cal C}_\delta $ is a  cluster. To prove this fact we need the following inequality, where $z'$ and $z''$ are two arbitrary elements from ${\cal C}_\delta $:
\[
\begin{array}{rcl}
\I(x:y) &\le& \I(x:y \cnd z') + \I(x:y \cnd z'') + \I(z':z'') \\
&&{} + \I(x:y \cnd z) + \I(x:z \cnd y) + \I(y:z \cnd x) + O(\log \C(x,y,z)).
\end{array}
\]
This inequality\footnote{This inequality is a \emph{non--Shannon type} one, i.e., it cannot be represented as a linear combination of several instances of  inequalities representing non negativity of conditional Kolmogorov complexity, or mutual information, or conditional mutual information.
The very first example of a non--Shannon type linear inequality for Shannon's entropy was proven by Zhang and Yeung in \cite{zhang-yeung}. It is known, see \cite{hammer}, that  the same linear inequalities are true for Shannon's entropy and Kolmogorov complexity. The inequality used in our proof is a little generalization of the inequality discovered by Zhang and Yeung, see \cite{mmrv} for details.} is valid for all strings $x,y,z,z',z''$, see \cite{mmrv}.
Since $z'$ and $z''$ are clones of $z$ conditional on $x,y$, the inequality rewrites to
\[
\begin{array}{rcl}
\I(x:y) &\le& \I(z':z'') + O(\eps+\log \C(x,y,z)).
\end{array}
\]
It follows that
\[
\begin{array}{rcl}
\C(z'' \cnd  z') &=&   \C(z'') - \I(z':z'') \\
&\le & \C(z)   - \I(x:y) + O(\eps+\log \C(x,y,z))\\
&=& \C(z  \cnd x,y)  + O(\eps+\log \C(x,y,z)).
\end{array}
\]
Therefore,  ${\cal C}_\delta $ is a cluster with the parameters 
\[
\big(\C(z \cnd x,y) + O(\delta+\log \C(x,y,z)), \C(z \cnd x,y) - O(\delta+\log \C(x,y,z)) \big).
\]
Theorem~\ref{th:main} implies that there exists a string $w$ (the core of the cluster) such that 
$\C(w \cnd \hat z) = O(\eps +\log \C(x,y,z))$ and $\C(\hat z \cnd w) = \C(z \cnd x,y) + O(\eps +\log \C(x,y,z))$ for all $\hat z \in {\cal C}_\delta$ (including the original string $z$).  
It is easy to compute Kolmogorov complexity of  $w$:
\[ \C(w) = \I(z:\langle x,y\rangle ) + O(\eps + \log \C(x,y,z)) =  \I(x:y:z) + O(\eps + \log \C(x,y,z)).\] 
It remains to observe that due to Lemma~\ref{lem:path}
\[
\C(w \cnd x) = O(\eps+\log \C(x,y,z))
\text{ and }
\C(w \cnd y) = O(\eps+\log \C(x,y,z))
\]
 (it is enough to count the number of chains $x$ -- $\hat z$ -- $w$ and $y$ -- $\hat z$ -- $w$ with $\hat z\in {\cal C}_\delta$).
\end{proof}
 
\section{Discussion}

The questions addressed in this paper seem to be related to the density properties studied in \cite[Section~IX]{InformationDistance1998}, where the authors estimated the rate of growth of the number of elements in balls of radius $r$ in the metric spaces induced by the information distance. We should stress, however, that a ball (the set of  strings $x'$ at the distance at most $r$ from a given center $x$) is not a cluster in the sense of Definition~\ref{def:cluster}. 

The main result of this theorem is formulated and proven with a ``logarithmic precision,'' which is quite typical for the theory of Kolmogorov complexity.  In many applications the logarithmic precision is enough.
At the same time, it seems that the residue terms in Theorem~\ref{th:main} can be made somewhat tighter. 
In this vein, an anonymous referee of the \emph{Theoretical Computer Science} journal
suggested the following stronger version of Theorem~\ref{th:main}:
\begin{theorem-bis*}
Let $S$ be a set of strings such that $C(x|x') \le m$ for every two strings $x,x'\in S$. Assume that $\log \#S \ge m - d$ for some $d$. Then there exists a string $z$ such that $C(z|x, m) < O(d)$ and $C(x|z, m) < m + O(d)$.
\end{theorem-bis*}
\noindent
(Compared with Theorem~\ref{th:main}, the conclusion of this statement contains no additive terms $O(\log m)$; on the other hand,  the number  $m$ is included into the conditions of two expressions with  Kolmogorov complexity.) This version of the theorem  can be proven by an argument very similar to the proof of Theorem~\ref{th:main} presented in Section~\ref{s:2}; we only need to relativize to  $m$ the expressions with Kolmogorov complexity that appear in the proof. 
However, if we want to rephrase  Theorem~${\ref{th:main}'}$ in terms of clusters and daisies (cf. the paragraph after Theorem~\ref{th:main} on p.~\pageref{th:main}), we would need to revise the definition of a daisy.
This observation suggests that the definitions of clusters and daisies might need to be refined.

Let us mention also that  slightly different  variants of the definition of a cluster may be helpful in some applications, see \cite{muchnik-romashchenko-ppi}.
An interesting variant of the definition  was proposed by S.~Epstein in \cite{sam}, where the principal parameter was not the \emph{maximum} but the \emph{average} distance between elements of a cluster.  Epstein argued that the {density} of a cluster is connected with  the mutual information between this cluster  and the \emph{halting sequence} (characterizing the stopping Turing machine in the universal enumeration).
Thus, the formulation of the most natural and practical definition of a dense cluster (in the sense of information distance) remains an open question.

\paragraph{Acknowledgments.} The author is grateful to  Alexander Shen and Marius Zimand for fruitful discussions, 
especially for the elegant form of the probabilistic claim 
(see the \emph{Claim} on p.~\pageref{claim})
suggested by Alexander Shen.
The author also thanks the anonymous referees of the Theoretical Computer Science journal for the careful review of the paper and valuable comments and suggestions.

\bibliographystyle{unsrt}

\bibliography{clusters}
\end{document}